
\documentclass[12pt]{iopart}

\usepackage{amssymb}
\usepackage{amsthm}
\usepackage{amsbsy}
\usepackage{graphicx}
\usepackage{fullpage}
\usepackage{iopams}

\newtheorem{theorem}{Theorem}
\newtheorem{lemma}{Lemma}

\theoremstyle{definition}
\newtheorem{remark}{Remark}

\newcommand{\eps}{\varepsilon}
\newcommand\J{\mathbf{J}}
\newcommand\U{\mathbf{U}}

\newcommand\I{\mathbf{I}}
\renewcommand\L{\mathbf{L}}
\newcommand\LL{\mathbf{\Lambda}}
\newcommand\Paths{\mathbb{P}}
\newcommand\path{\mathbf{p}}

\newcommand{\vS}{\boldsymbol{\sigma}}
\newcommand{\bS}{{\mathbf S}}



\newcommand\m{\mathbf{m}}
\newcommand\R{\mathbf{R}}
\newcommand\vA{\mathbf{A}}
\newcommand\gf{\boldsymbol{\phi}}

\newcommand{\ui}{\textrm{i}}
\newcommand{\ue}{\textrm{e}}

\newcommand{\ud}{\mathrm{d}}

\newcommand{\A}{{\mathbb A}}
\newcommand{\B}{{\mathbb B}}

\newcommand{\lmin}{L_\mathrm{min}}

\newcommand{\PP}{\mathcal{P}}
\newcommand{\cL}{\mathcal{L}}
\newcommand{\cV}{\mathcal{V}}
\newcommand{\cB}{\mathcal{B}}

\newcommand{\al}{\alpha}

\newcommand{\alb}{\bar{\alpha}}

\newcommand{\bof}{\mathbf{f}}
\newcommand{\bz}{\mathbf{0}}
\newcommand{\ch}{\hat{c}}
\newcommand{\bog}{\mathbf{g}}


\newcommand{\re}{\mathrm{Re}}
\newcommand{\im}{\mathrm{Im}}


\begin{document}

\title{Mathematical aspects of vacuum energy on quantum graphs}

\author{G Berkolaiko$^1$, J M Harrison$^{1,2}$ and J H Wilson$^{1,3}$}\address{$^1$ Dept. of Mathematics, Texas A\&M University, College Station, TX
  77843-3368, USA}\address{$^2$ Department of Mathematics, Baylor University, Waco, TX 76798,
    USA}\address{$^3$ University of Maryland, Department of Physics, College Park, MD 20742, USA}
\ead{\mailto{gregory.berkolaiko@math.tamu.edu}, \mailto{jon\_harrison@baylor.edu}, \mailto{jwilson.thequark@gmail.com}}
\date{}


\begin{abstract}
  We use quantum graphs as a model to study various mathematical
  aspects of the vacuum energy, such as convergence of periodic path
  expansions, consistency among different methods (trace formulae
  versus method of images) and the possible connection with the underlying
  classical dynamics.
  In our study we derive an expansion for the vacuum energy in
  terms of periodic paths on the graph and prove its convergence and
  smooth dependence on the bond lengths of the graph.  For an
  important special case of graphs with equal bond lengths, we derive a
  simpler explicit formula.  With minor changes this formula also
  applies to graphs with rational (up to a common factor) bond
  lengths.
  The main results are derived using the trace formula.  We also
  discuss an alternative approach using the method of images and prove
  that the results are consistent.  This may have important
  consequences for other systems, since the method of images, unlike
  the trace formula, includes a sum over special ``bounce paths''.  We
  succeed in showing that in our model bounce paths do not contribute
  to the vacuum energy.
  Finally, we discuss the proposed possible link between the magnitude
  of the vacuum energy and the type (chaotic vs.\ integrable) of the
  underlying classical dynamics.  Within a random matrix model we
  calculate the variance of the vacuum energy over several ensembles
  and find evidence that the level repulsion leads to suppression of
  the vacuum energy.
\end{abstract}

\ams{34B45, 81Q10, 15A52}

\maketitle

\section{Introduction}

Vacuum energy is a concept arising in quantum field theory and was
first shown by Casimir \cite{Casimir:OnTheAttraction} to have an
observable effect on two perfectly conducting parallel plates, causing
them to attract.  Since then, experiments with various physical
geometries have confirmed the effects of vacuum energy (see
\cite{Boyer:QuantumZero-Point, Plunien:TheCasimirEffect, p:BMM:NDCE,
  Milton:TheCasimirEffect}).

In time-independent situations the vacuum energy is formally given by
\begin{equation}
  E = \frac12 \sum_n k_n
  \label{eq:quickref}
\end{equation}
where $k_n^2$ are the eigenvalues of a Hamiltonian, $H$, where throughout
we take $\hbar=1=c$.  The above expression arises in quantum
field theory in the context of cavities and cosmological models
\cite{Plunien:TheCasimirEffect}, and it is formally divergent.  To get
meaningful result from this expression, the vacuum energies for two
different configurations are subtracted from one another
\cite{Boyer:QuantumZero-Point}.  To accomplish this in a systematic
way, we employ an ultra-violet cutoff defining the energy as the
regular part of
\begin{equation}\label{eq:vc_defn}
E(t) = \frac 1 2 \sum_{n} k_n e^{-k_n t},
\end{equation}
as $t\to0$.  To evaluate (\ref{eq:vc_defn}), it is sometimes
convenient to employ the trace of the cylinder kernel, $T(t) = \sum_n
e^{-k_n t}$, \cite{p:SAF:GLVECOT:}.  In this way, $E(t) = - T'(t)/2$.
The singular term in the expansion of $E(t)$ is related to the vacuum
energy density of free space, and physical justification for its
removal is described, for example, in \cite{Boyer:QuantumZero-Point}
(for systems similar to those considered here, see
\cite{Cavalcanti:CasimirForce,Me:PistonPaper}).

A widely employed method of calculation of the vacuum energy is
expanding it into a sum over classical paths \cite{Brown:VacuumStress,
  p:JR:CFPTM, Schaden:Infinity-free, p:SAF:POSOSVE,
  Jaffe:CasimirEffect, Liu:RobinPlate}.  The expansion is usually done
by the method of images, or ``multiple reflections'', leading to a sum
over all closed paths.  It has been argued in
\cite{Schaden:Infinity-free} that for certain geometries restricting
the sum to include only the periodic paths (``semiclassical
evaluation'') correctly reproduces asymptotic behavior of the vacuum
energy and is much simpler to evaluate.  A periodic path comes back to
the starting point with the same momentum, while a closed path might
not.  Another popular approximation predicts the sign of the
vacuum energy by considering only short orbits
\cite{Sch06,Me:PistonPaper}.  This implicitly assumes that the
convergence of the complete series is sufficiently fast.

In the present paper we aim to contribute to this discussion by
studying the vacuum energy on quantum graphs (for another model where
similar questions are addressed, see \cite{LebMonBoh01}) .  Quantum
graphs are often used as mathematical models that exhibit the relevant
phenomena while being sufficiently simple to allow mathematical
treatment.  We compare the method of images with the direct
application of the trace formula (which is exact on graphs) and
demonstrate that the outcome is the same.  This is done by showing
that the contribution of the ``bounce paths'' --- the paths that are
closed but not periodic --- is identically zero.  We also prove that
the resulting sums converge, giving an estimate for the rate of
convergence, and can be differentiated term-by-term with respect to
the topological parameters present in the model.

One of the main reasons for the success of quantum graphs as models
(for example, of quantum chaos, see \cite{GS06} for a review) is the
existence of an exact trace formula on quantum graphs.  A trace
formula is a relation between the spectrum and the set of periodic
orbits of the system.  For graphs, the trace formula was first found
by Roth \cite{Roth:FirstTrace} and then by Kottos and Smilansky
\cite{Kottos:QuantumChaos}.  Subsequent studies of the mathematical
properties of the trace formula on graphs included works by Kostrykin,
Potthoff and Schrader \cite{Kostrykin:HeatKernels} and Winn
\cite{Winn:Trace}.

The trace formula of \cite{Kottos:QuantumChaos} gives an expression
for the density of states $d(k)$ defined as
\begin{equation}
  \label{eq:density_def}
  d(k) = \sum_{n=1}^\infty \delta(k-k_n),
\end{equation}
where $\delta(\cdot)$ is the Dirac delta-function.  The vacuum energy
is then, formally,
\begin{equation}
  \label{eq:ve_via_dens}
  E(t) = \frac12 \int k e^{-k t} d(k) dk,
\end{equation}
quickly leading to the result (for $k$-independent scattering
matrices, see Sec.~\ref{sec:graphs}),
\begin{equation}
  \label{eq:answer}
  E_c = - \frac 1 {2 \pi} \sum_{n=1}^\infty
  \sum_{p \in \PP} \frac{A_p}{\ell_p n_p}.
\end{equation}
The sum is over the set of all periodic paths $\PP$ on the graph, $\ell_p$ is the metric
length of the path $p$, $n_p$ is the period (the number of
bonds) of the path and $A_p$ is its stability amplitude (see
Section~\ref{sec:formal} for definitions).

After introducing the notation in Section~\ref{sec:graphs} and
considering some explicit examples in Section~\ref{sec:examples}, we
prove the mathematical correctness of the calculation outlined above.
This is done in Section~\ref{sec:conv}, where the convergence of
(\ref{eq:answer}) is also analyzed.  In Section~\ref{sec:deriv} we
show that we can differentiate (\ref{eq:answer}) with respect to
individual bond lengths, showing the smoothness ($C^\infty$) of $E_c$
as a function of lengths.

In Section~\ref{sec:images} we briefly discuss the method of images
(covered more fully in \cite{t:W:VEQG}) and show that the
contributions from the bounce paths cancel.  Finally, in
Section~\ref{sec:averages} we discuss random matrix models of the
vacuum energy.  As expected, in such models the average vacuum energy
is zero and there is no preferred sign to the Casimir force.  However,
by analyzing the second moment of the energy we confirm an earlier
observation by Fulling \cite{Fulling:Private,p:SAF:GLVECOT:} that
level repulsion tends to decrease the {\em magnitude\/} of the energy.

\section{Vacuum energy and quantum graphs}\label{sec:graphs}

Quantum graphs were introduced as a model of vacuum energy by Fulling
\cite{Fulling:Snowbird} who considered the effect of the energy
density near a quantum graph vertex by constructing the cylinder
kernel for an infinite star graph (a graph with one vertex and $B$
bonds extending to infinity).  The quantum field theory origins of
this in a graph context were given by Bellazzini and Mintchev
\cite{Bellazzini:QFonSGraphs}. The vacuum energy expression for
quantum graphs obtained in the present manuscript was also used in
\cite{Me:PistonPaper} where the convergence was investigated
numerically (we prove rigorous estimates here).

We start by briefly recalling the terminology of the quantum graph
model, see \cite{Kuchment:QGraphsI} for a general review of quantum
graphs.  We consider a finite metric graph $\Gamma$ consisting of
a set of vertices $\cV$, and a set of bonds $\cB$.
A (undirected) bond $b$ connecting the vertices $v$ and $w$ is denoted by
$\{v,w\}$.  Each bond $b$ is associated with a closed interval
$[0,L_b]$, thus fixing a preferred direction along the bond (from $0$
to $L_b$).  This direction can be chosen arbitrarily.  If the
direction from $v$ to $w$ is chosen, the bond $b=\{v,w\}$ gives rise
to two directed bonds, $b^+=(v,w)$ and $b^-=(w,v)$.  Whenever the
distinction between $b^+$ and $b^-$ is unimportant, we will denote the
directed bonds by Greek letters: $\alpha$, $\beta$.  In addition, the reversal
$\alpha$ is denoted by $\bar{\alpha}$ (e.g.\ if $\alpha = b^+$, then
$\bar{\alpha} = b^-$).
We will denote by $B$ the number of bonds $|\cB|$; correspondingly,
the number of directed bonds is $2B$.
The length of the
directed bond is naturally determined by the length $L_b$ of the underlying
undirected bond.  The total length of $\Gamma$ is $\cL=\sum_{b\in \cB} L_b$.
We will denote by $\L=\textrm{diag}\{L_1,\dots,L_B,L_1,\dots,L_B \}$
the diagonal $2B\times 2B$ matrix of directed bond lengths.

In this article we study the spectrum of the negative Laplacian
on the graph.  The Laplacian acts on the Hilbert space ${\mathcal
  H}(\Gamma) := \bigoplus_{b\in \cB} H^2 \bigl( [0,L_b] \bigr)$ of
(Sobolev) functions defined on the bonds of the graph.  On the bond
$b$ it acts as the 1-dimensional differential operator $-\frac{\ud^2
}{\ud x_b^2}$.  A domain on which the Laplacian is self-adjoint may be
defined by specifying matching conditions at the vertices of $\Gamma$,
see e.g.  \cite{ Kostrykin:Kirchhoff-I, Kostrykin:Kirchhoff-II,
  Kostrykin:TheGeneralized, Kuchment:QGraphsI}.

To specify the matching conditions, let $f$ be a function in
${\mathcal H}(\Gamma)$.  For a vertex $v$ of degree $d$ we denote by
$\bof^{(v)}$ the vector of values of $f$ at $v$,
$\bof^{(v)}=(f_{b_1}(v),\dots,f_{b_{d}}(v))^T$, where $f_b(v)=f_b(0)$
if $b=\{v,w\}$ is oriented from $v$ to $w$ and $f_b(v)=f_b(L_b)$
otherwise.  Furthermore, let $\bog^{(v)}$ denote the vector of
outgoing derivatives of $f$ at $v$,
$\bog^{(v)}=(f'_{b_1}(v),\dots,f'_{b_{d}}(v))^T$, i.e.
$f'_b(v)=f'_b(0)$ if $b=\{v,w\}$ is oriented from $v$ to $w$ and
$f'_b(v)=-f'_b(L_b)$ otherwise.  Matching conditions at $v$ can be
specified by a pair of matrices $\A^{(v)}$ and $\B^{(v)}$ through the
linear equation
\begin{equation}\label{eq:matching}
\A^{(v)} \bof^{(v)} +\B^{(v)} \bog^{(v)}=\bz \ .
\end{equation}
The matching conditions define a self-adjoint operator if
$(\A^{(i)}, \B^{(i)})$ has maximal rank and $\A^{(i)}\B^{(i)\dagger}$ is
self-adjoint at each vertex (where a $\B^{(i)\dagger}$ represents the adjoint of $\B$).

A solution to the eigenvalue equation on the bond $b$,
\begin{equation}
  \label{eq:eveq}
  -\frac{\ud^2}{\ud x_b^2} \psi_b(x_b)=k^2 \psi_b(x_b),
\end{equation}
can be written as a linear combination of plane waves,
\begin{equation}
  \label{eq:plane-waves}
  \psi_b(x_b) = c_b \ue^{\ui k x_b} + \ch_b \ue^{-\ui k x_b} \ .
\end{equation}
where $c$ is the coefficient of an outgoing plane wave at $0$ and
$\ch$ the coefficient of the incoming plane wave at $0$.  A solution
on the whole graph can be defined by specifying the corresponding
vector of coefficients
$\mathbf{c}=(c_1,\dots,c_B,\ch_1,\dots,\ch_B)^T$.

The matching conditions at the vertex $v$ define a vertex scattering matrix
\begin{equation}
  \label{eq:S-matrix_defn}
  \vS^{(v)}(k)=-(\A^{(v)}+\ui k\B^{(v)})^{-1}(\A{(v)}-\ui k\B^{(v)}),
\end{equation}
see \cite{Kostrykin:Kirchhoff-I}.  $\vS^{(v)}$ is unitary and the
elements of $\vS^{(v)}$ are complex transition amplitudes which in
general depend on $k$.  However, for a large class of matching
conditions including the so called Kirchhoff or natural conditions the
S-matrix is independent of $k$.
Kirchhoff matching conditions require that
$\psi$ is continuous at the vertex and the outgoing
derivatives of $\psi$ at the vertex sum to zero.
These conditions may be
written in the form (\ref{eq:matching}) with matrices
\begin{equation}\label{eq:neumann_matrices}
\A=\left( \begin{array}{ccccc}
1& -1 & 0 & 0 & \dots \\
0 & 1 & -1 & 0 & \dots \\
& & \ddots & \ddots &  \\
0& \dots & 0 &  1 & -1 \\
0 &\dots & 0 & 0& 0 \\
\end{array} \right)  \qquad \B=\left( \begin{array}{cccc}
0& 0 & \dots & 0 \\
\vdots & \vdots &  & \vdots \\
0& 0 & \dots & 0 \\
1 &1 & \dots & 1 \\
\end{array} \right) \ .
\end{equation}
Substituting in (\ref{eq:S-matrix_defn}) leads to $k$-independent transition
amplitudes
\begin{equation}
  \label{eq:neumann_amplitudes}
  [\vS]_{ij}=\frac{2}{d}-\delta_{ij} \ ,
\end{equation}
where $d$ is the degree of $v$.

The matrix $\vS^{(v)}$ relates incoming and outgoing plane wave coefficients
at $v$, $\mathbf{c}^{(v)}=\vS \hat{\mathbf{c}}^{(v)}$.
Collecting together
transition amplitudes from all the vertices of a graph we may define the
familiar $2B\times 2B$ \emph{bond scattering matrix} $\bS$ \cite{KS99},
\begin{equation}
  \label{eq:bondS-matrix_defn}
  [\bS]_{(v',w') (v,w)} = \delta_{w,v'} [\vS^{(w)}]_{(v,w')} \ .
\end{equation}
We shall also need the \emph{quantum evolution operator} $\U = \bS \ue^{\ui k
  \L}$, which acts on the vector of $2B$ plane wave coefficients
indexed by directed bonds.  For a general graph, the spectrum can be
computed as the zeros of the equation
\begin{equation}
  \label{eq:det_eq}
  \det(\I -  \bS e^{ik\L}) = 0.
\end{equation}
This formula goes back at least to \cite{vB85}; for a discussion of
scattering matrices of different types we refer the reader to
\cite{KS99}.

The spectral theory of quantum graphs is often extended to included quantum evolution operators defined by specifying \emph{a priori} a set of unitary vertex scattering matrices as in \cite{p:SS:SSFQG,p:T:USMESS}.
The vertex scattering matrices are typically chosen to be $k$ independent and to have other desirable features, for instance transition amplitudes of equal magnitude as is the case if the scattering matrix is a fast Fourier transform matrix
\begin{equation}
  [\vS]_{ij} = \frac{1}{\sqrt{d}} \ue^{\ui \frac{2\pi}{d} ij}  \ .
\end{equation}
Such a scattering matrix would be difficult to produce from matching conditions of a self-adjoint operator at a vertex.
For such a quantum evolution operator the spectrum is still defined by (\ref{eq:det_eq}) however the scattering matrices in general no longer correspond to a self-adjoint realization of the Laplace operator on the graph.

In the following we work with $k$-independent scattering matrices which can either be considered to come from a self-adjoint Laplace operator or through specifying unitary vertex scattering matrices directly.

\section{Some explicit examples}\label{sec:examples}

\subsection{Star graph with bonds of equal length}

One case where the vacuum energy can be computed explicitly is the
quantum star graph with bonds of equal length.  This example was first
considered in \cite{Me:PistonPaper} and here, for completeness, we
summarize the computation.

\begin{figure}[htb]
  \begin{center}
    \includegraphics[width=4cm]{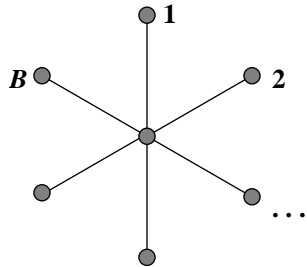}
    \caption{A star graph with $B$ bonds}\label{fig:star}
  \end{center}
\end{figure}

Consider a star graph (see Fig.~\ref{fig:star}) with $B$ bonds.  The
bond $b$ has length $L_b$.  We consider the  equation
(\ref{eq:eveq}) with Neumann conditions.  At the central vertex, this
translates into
\begin{equation}
  \label{eq:cond_central}
  \sum_b \psi'_b(0) = 0, \qquad \psi_1(0) = \ldots = \psi_B(0)
\end{equation}
and, at the end-vertices, into
\begin{equation}
  \label{eq:cond_end}
  \psi_b'(L_b) = 0, \quad \forall b.
\end{equation}

Solutions of (\ref{eq:eveq}) together with (\ref{eq:cond_end}) can be
written as
\begin{displaymath}
  \psi_b(x) = C_b \cos(k(L_b-x)).
\end{displaymath}
Imposing (\ref{eq:cond_central}) we conclude that the spectrum consists of the
solutions to
\begin{displaymath}
  Z(k) = \sum_{b=1}^B \tan(kL_b) = 0,
\end{displaymath}
if the lengths are rationally independent.  We note that since $Z(k)$ is an
increasing function, there is exactly one zero of $Z(k)$ between each pair of
consecutive poles.  Because of rational independence, the poles of different
tangents do not coincide.  If we lift the restriction on lengths, in addition
to the zeros of $Z(k)$ we have the following eigenvalues: $\kappa$ is an
eigenvalue of multiplicity $m$ if there are $m+1$ lengths $L_{b_j}$ such that
$\kappa$ is a pole of each $\tan(k L_{b_j})$.

In particular, if all lengths are equal, $L_1 = \ldots = L_B = L$, $Z(k)$
is simply $B\tan(kL)$.  Each zero of $Z(k)$ is a simple eigenvalue, while
each pole is an eigenvalue of multiplicity $B-1$.  We can now evaluate
\begin{eqnarray*}
  \fl  T(t) = \sum_{n=1}^\infty e^{-tk_n}
    = \sum_{n=0}^\infty e^{-n\pi t / L}
    + (B-1) \sum_{n=0}^\infty e^{-(2n+1)\pi t / 2L}
    = \frac{1 + (B-1) e^{-\pi t / 2L}}{1 - e^{-\pi t/L}}\\
    = \frac{BL}{\pi}t^{-1} + \frac12  - \frac{(B-3)\pi}{24L}t+O(t^2)
\end{eqnarray*}
Now we take the regular part and the limit $t\to0$ of $-T'(t)/2$ which
gives us the vacuum energy,
\begin{equation}
  \label{Ec_star_answer}
  E_c = \frac{(B-3)\pi}{48 L}.
\end{equation}

\subsection{General graphs with bonds of equal length}

If all bond lengths of the graph are equal to $L$, we can use equation
(\ref{eq:det_eq}) to explicitly describe the infinite spectrum of the
graph in terms of the finite spectrum of $\bS$.  Indeed let $\LL$ be
the diagonal matrix of the eigenvalues of $\bS$.  Then $e^{ik\L} =
e^{ikL}\I$ and, therefore, $\det(\I - \bS e^{ikL\I}) = \det(\I - \LL
e^{ikL})$.  Consequently the solutions of equation~(\ref{eq:det_eq})
are the values $k$ such that
\begin{equation}
  \label{eq:k_solns}
  e^{ikL} e^{i\theta_j} = 1
\end{equation}
for some $j$.  Here by $e^{i\theta_j}$ we denoted the $j$-th eigenvalue of
(unitary) matrix $\bS$.  Thus the $k$-spectrum is
\begin{equation}
  \label{eq:k_spec}
  \bigcup_{j=1}^{2B} \left\{ \frac{2\pi n - \theta_j}{L} \right\}_{n=1}^\infty
\end{equation}
where we choose $\theta_j$ to lie between $0$ and $2\pi$.

Now we compute the trace of the cylinder kernel $T(t)$
\begin{eqnarray}
    \label{eq:T_comp_gen_graphs}
    T(t) &= \sum_{n=1}^\infty e^{-tk_n}
    = \sum_{j=1}^{2B} e^{t\theta_j/L} \sum_{n=1}^\infty e^{-2\pi nt/L}
    = \left(e^{2\pi t/L} - 1\right)^{-1} \sum_{j=1}^{2B} e^{t\theta_j/L}\\
    &= \sum_{j=1}^{2B} \left[\frac{L}{2\pi t} + \frac{\theta_j-\pi}{2\pi}
      + \frac{3 \theta_j^2  + 2\pi^2  - 6 \theta_j\pi}{12 L \pi}t
      + O(t^2) \right].
\end{eqnarray}
Thus, the vacuum energy is
\begin{equation}
  \label{eq:Ec_gen_graph_equal}
  E_c = -\sum_{j=1}^{2B} \frac{3 \theta_j^2  + 2\pi^2  - 6 \theta_j\pi}{12 L
  \pi} = -\frac{\pi}{L} \sum_{j=1}^{2B} B_2(\theta_j/2\pi),
\end{equation}
where $B_2(\cdot)$ is the second Bernoulli polynomial (compare to the $B=1$
case discussed in \cite{p:SAF:GLVECOT:}).

As an example of application of formula (\ref{eq:Ec_gen_graph_equal})
consider again the star graph with equal bond lengths.  The
eigenphases $\theta_j$ of the S-matrix of a star graph are $0$, $\pi$,
$\pi/2$ and $3\pi/2$, the latter two with multiplicity $B-1$.
Substituting into equation (\ref{eq:Ec_gen_graph_equal}) one can
recover (\ref{Ec_star_answer}).

\subsection{Graphs with bonds of rational length}

By introducing the Neumann vertices of degree 2 we do not change the spectrum
of the graph and therefore the vacuum energy.  On the other hand, if the bonds
of the graph are rational (up to an overall factor), by introducing such
``dummy'' vertices we can convert the original graph into a graph with bonds
of equal length.  The number of bonds (and the dimension of the scattering
matrix $\bS$) will increase as a result, but the vacuum energy will still be
explicitly computable using equation~(\ref{eq:Ec_gen_graph_equal}).

Moreover, one can conceivably approximate rationally independent
lengths by rational ones and use the result as a numerical
approximation to the true vacuum energy.  For this approach to work
one needs to know, \textit{a priori}, that $E_c$ is continuous as a
function of bond lengths.  This question is one of the main subjects
of Section~\ref{sec:trace}.

\section{Vacuum energy via the trace formula}
\label{sec:trace}

\subsection{Formal calculation}
\label{sec:formal}

In this section we perform a formal calculation of the vacuum energy
$E_c$ using the trace formula.  We shall investigate the rigor of the
manipulations in Section~\ref{sec:conv}.

The trace formula (see, e.g. \cite{GS06}) connects the
spectrum $\{k_n\}$ of the graph with the set of all periodic
orbits (or periodic paths) on the graphs.  A periodic path of period $n$
is a sequence $(\alpha_1,\alpha_2,\ldots, \alpha_n)$ of directed bonds which
satisfy $[\bS]_{\alpha_{j+1},\alpha_j} \neq 0$ for all
$j=1,\ldots,n$ (the index $j+1$ is taken modulo $n$).  A periodic
orbit is an equivalence class of periodic paths with respect to the
cyclic shift $(\alpha_1,\alpha_2,\ldots, \alpha_n) \mapsto (\alpha_2,\ldots,
\alpha_n, \alpha_1)$.  We denote by $\PP_n$ the set of all periodic
paths of period $n$ and by $\PP$ the set of periodic paths of all
periods.  The the trace formula can be written as
\begin{equation}
  \label{eq:trace_formula}
  d(k) \equiv \sum_{n=1}^\infty \delta(k-k_n) = \frac{\cL}\pi
  + \frac1\pi \re \sum_{p\in\PP} A_p \frac{\ell_p}{n_p}
  e^{\ui k\ell_p},
\end{equation}
where $\cL$ is the total length of the graph (the sum of the bond
lengths), $n_p$ is the period of the periodic path $p$, $\ell_p =
\sum_{j=1}^{n_p} L_{\alpha_j}$ is the length of $p$ and $A_p =
\prod_{j=1}^n [\bS]_{\alpha_{j+1},\alpha_j}$ is its amplitude.  Using
the trace formula and equation~(\ref{eq:vc_defn}) we can formally
compute the vacuum energy.  Indeed,
\begin{eqnarray}
  \sum_{n=1}^\infty k_n e^{-tk_n} &= \int_0^\infty ke^{-kt} d(k) dk \\
  &= \frac{\cL}{2\pi} \int_0^\infty ke^{-kt} dk
  + \frac1\pi\re \sum_{p\in\PP} A_p \frac{\ell_p}{n_p}
  \int_0^\infty ke^{-kt+\ui k\ell_p} dk \\
  &= \frac{\cL}\pi t^{-2}
  + \frac1\pi \re \sum_{p\in\PP} \frac{A_p \ell_p}{n_p(t-\ui \ell_p)^2}
\end{eqnarray}
Removing the divergent Weyl term $\cL/\pi t^2$ due to regularization and taking the limit $t\to0$
leads to the following simple expression for the vacuum energy,
\begin{equation}
  \label{eq:vc_trace_expansion}
  E_c = -\frac1{2\pi} \re \sum_{p\in\PP} \frac{A_p}{\ell_p n_p}.
\end{equation}

\subsection{Equal bond lengths; equivalence to (\ref{eq:Ec_gen_graph_equal})}
\label{sec:equal}

In the case of equal bond lengths (\ref{eq:vc_trace_expansion}) should
be equivalent to the sum of second Bernoulli polynomials
(\ref{eq:Ec_gen_graph_equal}).  If the length of each bond is $L$ an
orbit that visits $n$ bonds has length $nL$ and we may rewrite
(\ref{eq:vc_trace_expansion}) as a sum over the topological length
$n$ followed by a sum over the set of all periodic paths visiting $n$
bonds, $\PP_n$,
\begin{eqnarray}
  E_c &= -\frac{1}{2\pi L} \re \sum_{n=1}^{\infty} \frac{1}{n^2}
  \sum_{p\in\PP_n} A_p \\
  &= -\frac{1}{2\pi L} \sum_{n=1}^{\infty} \frac{1}{n^2}
  \sum_{\al_1=1}^{2B} \dots \sum_{\al_n=1}^{2B} \re
  (S_{\al_1\al_2}S_{\al_2\al_3}\dots S_{\al_n\al_1}) \\
  \label{eq:Ec_power S}
  &= -\frac{1}{2\pi L} \sum_{n=1}^{\infty}
  \frac{1}{2n^2} \big( \tr \bS^n + \tr(\bS^\dagger)^n \big) \\
  &= -\frac{1}{2\pi L} \sum_{n=1}^{\infty} \sum_{j=1}^{2B} \frac{\cos
    n \theta_j}{n^2},
\end{eqnarray}
where $e^{i\theta_j}$ are the eigenvalues of the matrix $\bS$ with
$0\leqslant \theta \leqslant 2\pi$.  The sum over $n$ can be expressed
in a closed form, see Abramowitz and Stegun \cite{B:AS:HoMF}, formulae 27.8.6,
\begin{equation}\label{eq:sum Abr Ste}
  \sum_{n=1}^{\infty} \frac{\cos (n\theta)}{n^2}=
  \frac{3\theta^2 + 2\pi^2 -6\pi \theta}{12} = \pi^2 B_2(\theta/2\pi),
\end{equation}
where $B_2(\cdot)$ is the second Bernoulli polynomial.  Consequently we
recover expression (\ref{eq:Ec_gen_graph_equal}).

\subsection{Convergence of (\ref{eq:vc_trace_expansion}); vacuum energy as a
  function of the bond lengths}
\label{sec:conv}

We shall now present a rigorous derivation of
equation~(\ref{eq:vc_trace_expansion}).

\begin{theorem}
  The vacuum energy of the graph, defined as
  \begin{displaymath}
    E_c = \frac12 \lim_{t\to0+} \left[\sum_{n=1}^\infty k_n e^{-tk_n} -
    \frac{\cL}{\pi t^2} \right],
  \end{displaymath}
  is given by
  \begin{eqnarray}
    \label{eq:ve_sum_int_tr}
    E_c &= \frac{1}{2\pi} \sum_{n=1}^\infty
    \frac{1}{n} \re \int_0^\infty \tr\left(\bS e^{-s\L}\right)^n \ud s \\
    \label{eq:ve_sum_po}
    &= -\frac1{2\pi} \re \sum_{n=1}^\infty \sum_{p\in\PP_n}
    \frac{A_p}{\ell_p n_p} ,
  \end{eqnarray}
  where $\PP_n$ denotes the set of all periodic paths of period $n$.
  The vacuum energy is smooth ($C^\infty$) as a function of bond
  lengths on the set $\{L_b>0\}$.
\end{theorem}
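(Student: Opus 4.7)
The strategy is to carry out the formal calculation of Section~\ref{sec:formal} in a regime where the periodic-orbit expansion is absolutely convergent, then remove the regularisation. The key observation is that although $\bS e^{\ui k\L}$ has unit norm on the real $k$-axis, Wick-rotating to $k=\ui s$ with $s>0$ turns it into a strict contraction: since $\bS$ is unitary and $L_{\min}:=\min_b L_b>0$,
\[
\|\bS e^{-s\L}\|\;\leq\;e^{-sL_{\min}},\qquad |\tr(\bS e^{-s\L})^n|\;\leq\;2B\,e^{-nsL_{\min}}.
\]
Together with $\int_0^{\infty}e^{-nsL_{\min}}\,\ud s=1/(nL_{\min})$ these bounds make both the $n$-sum and the $s$-integral in (\ref{eq:ve_sum_int_tr}) absolutely convergent, majorised by $\sum_n n^{-2}$, and legitimise all interchanges of limits, sums and integrals below.

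First I would convert the regularised spectral sum to a contour integral. Let $\zeta(k):=\det(\I-\bS e^{\ui k\L})$, whose positive real zeros (with multiplicity) give the spectrum $\{k_n\}$. The argument principle yields
\[
\sum_n k_n e^{-k_n t}\;=\;\frac{1}{2\pi\ui}\oint_{C}k\,e^{-kt}\,\frac{\zeta'(k)}{\zeta(k)}\,\ud k
\]
for a contour $C$ encircling the positive real axis. Deforming $C$ onto the ray $k=\ui s$, $s>0$, replaces $\bS e^{\ui k\L}$ by $\bS e^{-s\L}$, while the large-$|k|$ asymptotics of $\log\zeta$ account for the divergent Weyl term $\cL/(\pi t^2)$ picked up from the arc at infinity. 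Into the remaining integral I substitute the absolutely convergent expansion $\log\zeta(\ui s)=-\sum_n\tfrac{1}{n}\tr(\bS e^{-s\L})^n$ and integrate term by term; the bounds above let me take $t\to 0^+$ inside the $\sum\int$, producing (\ref{eq:ve_sum_int_tr}). Expanding $\tr(\bS e^{-s\L})^n=\sum_{p\in\PP_n}A_p e^{-s\ell_p}$ and performing the $s$-integration bond by bond then gives (\ref{eq:ve_sum_po}), the interchange being licensed by the same majorant.

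For the $C^{\infty}$ dependence on the bond lengths, I differentiate under $\sum\int$. A derivative $\partial/\partial L_b$ applied to $(\bS e^{-s\L})^n$ inserts a factor of $s$ times a bounded diagonal matrix at one of the $n$ occurrences of $e^{-s\L}$; iterating, the $j$-th derivative is a sum of at most $n^{j}$ terms of operator norm $\leq s^{j}e^{-nsL_{\min}}$. Since $\int_0^{\infty}s^{j}e^{-nsL_{\min}}\,\ud s=j!/(nL_{\min})^{j+1}$, the differentiated series still converges absolutely and uniformly on every compact subset of $\{L_b>0\}$, so derivatives of all orders may be taken under the sum and the integral, yielding $E_c\in C^{\infty}$.

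The principal technical obstacle is the contour deformation in the second paragraph: one must rigorously extract $\cL/(\pi t^{2})$ from the arc at infinity (i.e.\ from the Weyl asymptotics of $\zeta'/\zeta$), control $\zeta$ away from the spectrum while deforming past any zero at $k=0$ (which can occur when $1\in\sigma(\bS)$), and verify decay of $\zeta'/\zeta$ on the imaginary ray uniformly in $t$. These points are standard for quantum-graph secular determinants but their careful execution is the real content of the proof; once the integral has been placed on $\{k=\ui s\}$, the rest is routine manipulation of absolutely convergent series.
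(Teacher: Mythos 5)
Your overall mechanism coincides with the paper's: rotate to the imaginary axis so that $\bS e^{\ui k\L}$ becomes the strict contraction $\bS e^{-s\L}$, bound $|\tr(\bS e^{-s\L})^n|\leq 2B\,e^{-nsL_{\min}}$ via the maximal singular value, and use the resulting $O(1/n^2)$ majorant to license every interchange --- this is exactly the paper's estimate (\ref{eq:trace_est}) and the bound (\ref{eq:jg_estimate}). Where you genuinely differ is in how the integral gets onto the imaginary ray, and this is where your argument is incomplete as written. The paper never manipulates the secular determinant globally: it integrates by parts to express the regularized sum through the counting function $N(k)$, invokes the already-established expansion (\ref{eq:trace_N_imag}) of $N^{\mathrm{osc}}(k+\ui\eps)$ for $\eps>0$ (absolutely convergent because $\U(k+\ui\eps)$ is subunitary), and then rotates the contour \emph{termwise}: each summand of the finite expansion $\tr\U^n(k+\ui\eps)=\sum_p A_p e^{\ui k\ell_p}e^{-\eps\ell_p}$ is entire and exponentially decaying in the first quadrant, so the rotation is elementary and no control of $\zeta$ away from its zeros is required. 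Your route via the argument principle applied to $\zeta'/\zeta$ is viable in principle, but the three issues you flag yourself --- extracting $\cL/(\pi t^2)$ from the arc at infinity, bounding $\zeta'/\zeta$ along the deformation, and the zero of $\zeta$ at $k=0$ when $1\in\sigma(\bS)$ --- are precisely the difficulties the paper's formulation is built to avoid, and deferring them to ``standard'' facts leaves the reduction to (\ref{eq:ve_sum_int_tr}) unproved. Either import the trace formula for $N^{\mathrm{osc}}(k+\ui\eps)$ as the paper does, or carry out the determinant estimates explicitly.

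For the $C^\infty$ claim your argument is different from the paper's and is correct: differentiating $(\bS e^{-s\L})^n$ directly produces at most $n^{j}$ terms of norm $s^{j}e^{-nsL_{\min}}$, and the factor $j!/(nL_{\min})^{j+1}$ from the $s$-integral restores absolute, locally uniform convergence of the differentiated series, so term-by-term differentiation is justified. The paper instead complexifies the bond lengths and applies the Cauchy integral formula on polydiscs of radius $L_{\min}/2$, obtaining the bound (\ref{eq:partial_derivative_bound}) with denominator $(L_{\min}/2)^{|\m|}$. The two arguments buy the same conclusion; yours avoids analytic continuation in the lengths at the cost of a small combinatorial count, which you have done correctly.
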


\begin{remark}
  The sum over the periodic orbits in (\ref{eq:ve_sum_po}) is finite
  for each $n$.  We will show, in particular, that the sum over $n$ is
  absolutely and uniformly convergent.  More precisely we will derive
  the following bound,
  \begin{equation}
    \label{eq:jg_estimate}
    \left| \sum_{p\in{\cal P}_n} \frac{A_p}{\ell_p n} \right|
    \leq \frac{2B}{n^2 L_\mathrm{min}},
  \end{equation}
  where $2B$ is the number of (directed) bonds and $L_\mathrm{min}$ is the
  minimal bond length.  This estimate shows that, if the (finite!) sum over
  periodic orbits of a fixed length is performed first, the series in
  (\ref{eq:ve_sum_po}) becomes absolutely convergent.  Moreover, it
  is uniformly convergent with respect to the change in bond length as long as
  $L_\mathrm{min}$ remains bounded away from zero.

  We would like to mention that our estimate (\ref{eq:jg_estimate})
  agrees with the numerical results of \cite{Me:PistonPaper}, even
  though in \cite{Me:PistonPaper} the ordering of the periodic orbits
  was different (according to the metric length $\ell_p$ rather than
  topological length $n$).
\end{remark}

\begin{proof}[Proof of (\ref{eq:ve_sum_int_tr})-(\ref{eq:ve_sum_po}).]

  The $C^\infty$ part of the proof will be given in the following section.

  We start with the definition of the vacuum energy and integrate by
  parts,
  \begin{equation}
    \label{eq:use_IDS}
    \sum_{n=1}^\infty k_n e^{-tk_n} = \int_0^\infty (tk-1) e^{-tk} N(k) dk,
  \end{equation}
  where $N(k)$ is the {\em integrated density of states\/} (IDS), a piecewise
  constant, increasing function
  \begin{equation}
    \label{eq:IDS}
    N(k) = \#\{n : 0<k_n<k\}.
  \end{equation}
  The integrated density of states $N(k)$ can be split into two parts,
  \begin{equation}
    \label{eq:IDS_split}
    N(k) = \mbox{const} + \frac{k\cL}\pi + N^{\mathrm{osc}}(k).
  \end{equation}
  The first two terms are unimportant: The first term makes no contribution
  in the integral, and the second term is removed at the regularization stage.
  The oscillatory part possesses an
  expansion, see \cite{GS06}, equation (5.24),
  \begin{equation}
    \label{eq:trace_N_imag}
    N^{\mathrm{osc}}(k+\ui\eps) = \frac{1}{\pi} \im \sum_{n=1}^\infty
    \frac{1}{n} \tr \U^n(k+\ui \eps),
  \end{equation}
  where $\U = \bS e^{\ui k \L}$.

  This expansion is absolutely convergent as long as $\eps>0$ since
  the matrix $\U^n(k+\ui \eps)$ is then subunitary (all eigenvalues
  lie within a circle of radius strictly less than 1).  As $\eps\to0$,
  $N^{\mathrm{osc}}(k+\ui\eps)$ converges to $N^{\mathrm{osc}}(k)$
  pointwise almost everywhere.  Moreover,
  $\left|N^{\mathrm{osc}}(k+\ui\eps)\right|$ is uniformly bounded by
  the number of bonds $B$ (in other systems one can show that the Weyl
  law implies that $|N^{\mathrm{osc}}(k)| = O(k^d)$ as $k \rightarrow \infty$
  where $d$ is the dimension of the system).
  Therefore,
  \begin{equation}
    \label{eq:use_IDS_eps}
    E_c = -\frac12 \lim_{t\to0} \lim_{\eps\to0} \int_0^\infty (tk-1) e^{-tk}
    N^{\mathrm{osc}}(k+\ui\eps) \ud k,
  \end{equation}
  and, using the convergence of expansion (\ref{eq:trace_N_imag}),
  \begin{equation}
    \label{eq:use_IDS_fin}
    E_c = -\frac{1}{2\pi} \lim_{t\to0} \lim_{\eps\to0}
    \sum_{n=1}^\infty
    \frac{1}{n} \im \int_0^\infty (tk-1) e^{-tk} \left[ \tr \U^n(k+\ui \eps)
    \right] \ud k.
  \end{equation}

  We will now show that the integral
  \begin{equation}
    \label{eq:integral}
    R_n = \int_0^\infty (tk-1) e^{-tk} \left[\tr \U^n(k+\ui \eps) \right] \ud k
  \end{equation}
  is absolutely bounded by $1/n$.  Thus the series is absolutely convergent
  uniformly in $\eps$ and $t$ and we can take the limits inside the sum.

  A typical term in the (finite!) expansion of the trace is $A_p
  e^{\ui k \ell_p} e^{-\eps \ell_p}$.  The two exponential factors,
  $e^{-tk}$ and $e^{\ui k \ell_p}$, ensure that the integrand is
  exponentially decaying in $k$ in the first quadrant of $\mathbb{C}$.
  Therefore we can rotate the contour of integration to the imaginary
  line, $k=\ui s$.  The integral becomes
  \begin{equation}
    \label{eq:int_rotated}
    R_n = \ui \int_0^\infty (\ui st - 1) e^{-\ui st}
    \tr\left(\bS e^{-(s+\eps)\L}\right)^n ds.
  \end{equation}
  We estimate
  \begin{equation}
    \label{eq:trace_ineq}
    \left|\tr\left(\bS e^{-(s+\eps)\L}\right)^n\right|
    \leq \sum_{j=1}^{2B} |\lambda_j(s+\epsilon)|^n,
  \end{equation}
  where $2B$ is the size of the matrix $\bS$ and $\lambda_j$ is $j$-th
  eigenvalue of the matrix $\bS e^{-(s+\eps)\L}$.  According to a
  familiar argument (see, e.g. \cite{Wey49}), the maximal
  $|\lambda_j|$ is bounded from above by the maximal singular value of
  the matrix.  The singular values are square roots of the eigenvalues
  of
  \begin{displaymath}
    \left(\bS e^{-(s+\epsilon)\L}\right)^\dagger
    \left(\bS e^{-(s+\epsilon)\L}\right) =  e^{-2(s+\epsilon)\L},
  \end{displaymath}
  which is a diagonal matrix with the maximal entry
  $e^{-2(s+\epsilon)L_{\min}}$ ($L_{\min}$ is the smallest bond length
  of the graph).  Thus we can estimate
  \begin{equation}
    \label{eq:trace_est}
    \left|\tr\left(\bS e^{-(s+\eps)\L}\right)^n\right|
    \leq e^{-nsL_{\min}} 2B.
  \end{equation}
  Finally,
  \begin{equation}
    \label{eq:int_est}
    |R_n| \leq 2B \int_0^\infty |\ui st-1| e^{-nsL_{\min}} ds \sim \frac1n.
  \end{equation}

  We notice that the integrand of (\ref{eq:int_rotated}) can be
  absolutely bounded by $e^{-ns(L_{\min}-\delta)}$ for arbitrarily
  small $\delta$ and sufficiently small $t$.  Thus, having brought the
  limits inside the sum, we can use the dominated convergence theorem
  to bring them inside the integral.  Taking the limit $t\to0$ and
  $\eps\to0$ inside integral (\ref{eq:int_rotated}) produces
  \begin{equation}
    \label{eq:uzhe_konec}
    E_c = \frac{1}{2\pi} \sum_{n=1}^\infty
    \frac{1}{n} \re \int_0^\infty \tr\left(\bS e^{-s\L}\right)^n \ud s.
  \end{equation}
  Now we expand the trace,
  \begin{equation}
    \label{eq:trace_expansion}
    \tr\left(\bS e^{-s\L}\right)^n = \sum_{p\in\PP_n} A_p e^{-s \ell_p},
  \end{equation}
  and integrate term by term to recover (\ref{eq:ve_sum_po}).
\end{proof}

\begin{remark}
  The basic idea of the proof, shifting the convergence into the
  subunitary matrix (see equation (\ref{eq:int_rotated}) and
  (\ref{eq:trace_est})) can also be used to study the convergence of
  the trace formula itself.  This was done in \cite{Win07}.
\end{remark}

\begin{remark}
  Since $\bS$ is $k$-independent, the trace in the integral is real
  and we do not need to take the real part in equations
  (\ref{eq:ve_sum_int_tr})-(\ref{eq:ve_sum_po}).  Indeed, the
  following lemma easily follows from the definition of $\bS$ and
  \cite[Prop 2.4]{Kostrykin:HeatKernels} applied to the matrices
  $\vS^{(v)}$.
  \begin{lemma}
    \label{lem:k_indep_S}
    The $\bS$-matrix
    of a graph is $k$-independent if and only if it satisfies
    \begin{equation}
      \label{eq:S_cond}
      \J\bS\J = \bS^\dagger,
    \end{equation}
    where $\J$ is defined by $J_{\alpha,\beta} = \delta_{\alpha,\bar{\beta}}$
    ($\bar{\beta}$ is the reversal of $\beta$ as defined in section \ref{sec:graphs}).
  \end{lemma}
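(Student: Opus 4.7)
The plan is to reduce the $\bS$-matrix identity $\J\bS\J=\bS^{\dagger}$ to the vertex-level condition that every $\vS^{(v)}$ is Hermitian, and then invoke Proposition~2.4 of \cite{Kostrykin:HeatKernels}, which (as I am assuming from the hint) states that a unitary vertex scattering matrix $\vS^{(v)}(k)$ arising from matching conditions $(\A^{(v)},\B^{(v)})$ is $k$-independent if and only if $\vS^{(v)}=(\vS^{(v)})^{\dagger}$. Since the full $\bS$-matrix is $k$-independent precisely when each $\vS^{(v)}$ is (the $k$-dependence enters only through the vertex blocks), the two-sided implication will follow from a purely algebraic identity, which is the main content of the proof.

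The key computation is to expand both sides of $\J\bS\J=\bS^{\dagger}$ in terms of directed-bond indices. Starting from $J_{\alpha,\beta}=\delta_{\alpha,\bar\beta}$, the left-hand side satisfies
\begin{equation*}
[\J\bS\J]_{(v'',w''),(v,w)}=[\bS]_{\overline{(v'',w'')},\overline{(v,w)}}=[\bS]_{(w'',v''),(w,v)}=\delta_{v,w''}\,[\vS^{(v)}]_{(w,v'')}.
\end{equation*}
On the other hand, by the defining formula $[\bS]_{(v',w'),(v,w)}=\delta_{w,v'}[\vS^{(w)}]_{(v,w')}$, the right-hand side is
\begin{equation*}
[\bS^{\dagger}]_{(v'',w''),(v,w)}=\overline{[\bS]_{(v,w),(v'',w'')}}=\delta_{w'',v}\,\overline{[\vS^{(w'')}]_{(v'',w)}}=\delta_{v,w''}\,\overline{[\vS^{(v)}]_{(v'',w)}}.
\end{equation*}
Matching these two expressions, the common factor $\delta_{v,w''}$ localizes everything at a single vertex $v$, and the identity $\J\bS\J=\bS^{\dagger}$ becomes equivalent to $[\vS^{(v)}]_{(w,v'')}=\overline{[\vS^{(v)}]_{(v'',w)}}$ for every vertex $v$ and every pair of bonds incident to it, i.e.\ to $\vS^{(v)}=(\vS^{(v)})^{\dagger}$ at every vertex.

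Combining this algebraic equivalence with Proposition~2.4 of \cite{Kostrykin:HeatKernels} applied vertex-by-vertex gives both directions: if $\bS$ is $k$-independent, then each $\vS^{(v)}$ is $k$-independent, hence Hermitian, hence $\J\bS\J=\bS^{\dagger}$; conversely, if $\J\bS\J=\bS^{\dagger}$, every $\vS^{(v)}$ is Hermitian and therefore $k$-independent, so $\bS$ itself is $k$-independent. There is no real obstacle; the only thing to be careful about is bookkeeping with the index convention in the definition of $\bS$ (making sure the arrow reversal $\J$ swaps the order of the directed-bond pair correctly on both the in- and out-indices). Once that is done, the proof is a one-line invocation of the cited proposition.
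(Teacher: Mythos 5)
Your proof is correct and follows essentially the same route as the paper, which simply asserts that the lemma ``easily follows from the definition of $\bS$ and Prop.~2.4 of Kostrykin--Schrader applied to the matrices $\vS^{(v)}$''; your index computation showing that $\J\bS\J=\bS^{\dagger}$ is equivalent to Hermiticity of every $\vS^{(v)}$ is exactly the bookkeeping the paper leaves implicit, and your reading of the cited proposition ($k$-independence of a vertex scattering matrix from matching conditions is equivalent to its self-adjointness) is the correct one.
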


  Now the complex conjugate of $\tr\left(\bS e^{-s\L}\right)^n$ is
  \begin{equation}
    \label{eq:cc_trace}
    \tr\left(e^{-s\L} \bS^\dagger\right)^n
    = \tr\left(e^{-s\L} \J\bS\J\right)^n
    = \tr\left(\bS\J e^{-s\L} \J\right)^n
    = \tr\left(\bS e^{-s\L}\right)^n,
  \end{equation}
  where we used the fact that the length of a bond is invariant with respect to
  direction reversal and, therefore, $\J e^{-s\L}\J = e^{-s\L}$.
\end{remark}

\subsection{Derivatives of the vacuum energy}\label{sec:deriv}

\begin{proof}[Proof of differentiability of $E_c$]
  We differentiate the expression (\ref{eq:ve_sum_int_tr}) term by term
 and show that the result is also absolutely convergent.  This can be
  done by using the following bound on the partial derivatives of
  $\tr \big( \bS \ue^{-s\L} \big)^n$,
  \begin{equation}
    \label{eq:partial_derivative_bound}
    \left| \frac{\partial^{m_1} \dots \partial^{m_B}}{
        \partial L_1^{m_1}\dots  \partial L_B^{m_B}}
      \tr \big( \bS \ue^{-s\L} \big)^n \right|
    \leqslant \frac{2B \ue^{-snL_{\min} /2}}{(L_{\min}/2)^{|\m|}}.
  \end{equation}
  Before proving (\ref{eq:partial_derivative_bound})
  we note that it implies the following bound,
  \begin{equation}
    \left| \frac{1}{n} \int_0^\infty
      \frac{\partial^{m_1} \dots \partial^{m_B}}{
        \partial L_1^{m_1}\dots  \partial L_B^{m_B}}
      \tr \big( \bS \ue^{-s\L} \big)^n \ud s \right|
    \leqslant \frac{2B}{n^2(L_{\min}/2)^{|\m|+1}} \ .
  \end{equation}
  Consequently
  \begin{equation*}
    \sum_{n=1}^\infty \frac{1}{n}
    \int_0^\infty \frac{\partial^{m_1} \dots \partial^{m_B}}{
      \partial L_1^{m_1}\dots  \partial L_B^{m_B}}
    \tr \big( \bS \ue^{-s\L} \big)^n \ud s
  \end{equation*}
  converges absolutely and we are therefore allowed to differentiate
  (\ref{eq:ve_sum_int_tr}) term by term.  We conclude that the vacuum
  energy is $C^\infty$ as a function of bond lengths.

  To prove bound (\ref{eq:partial_derivative_bound}) we use the Cauchy
  integral formula (see, e.g., \cite{b:MW:MMoP}),
  \begin{equation}\label{eq:cauchy}
    \fl \frac{\partial^{m_1} \dots \partial^{m_B}}{
      \partial L_1^{m_1}\dots  \partial L_B^{m_B}}
    \tr \big( \bS \ue^{-s\L} \big)^n =
    \frac{1}{(2\pi)^B} \int_0^{2\pi} \dots \int_0^{2\pi}
    \frac{\tr \big( \bS \ue^{-s(\L +\R(\gf))} \big)^n}{
      R_1^{m_1}\dots R_B^{m_B}}
    \ud \phi_1 \dots \ud \phi_B
  \end{equation}
  where $R_j=r_j\ue^{\ui \phi_j}$ and $\R (\gf) = \textrm{diag} \{
  R_1,\dots,R_B,R_1,\dots,R_B \}$.  Let $\vA=\bS \ue^{-s(\L +\R)}$.
  Since $\bS$ is unitary, we find $\vA^{\dagger}\vA=
  \ue^{-s(2\L+\R+\R^\dagger) }$.   The eigenvalues of $\vA$ are
  bounded by the maximal singular value of $\vA$,
  \begin{equation}
    \label{eq:sig val A}
    | \textrm{eig} ( \vA ) | \leqslant \max_{b=1\dots B}
    \, \ue^{-s (L_b +r_b\cos(\phi_b) )} \ .
  \end{equation}
  By choosing the radius $r_b=L_{\min} /2$ we see that
  \begin{equation}
    \label{eq:eig A bound}
    | \textrm{eig}( \vA ) | \leqslant
    \max_{b=1\dots B} \ue^{-s(L_b-L_{\min} / 2)} = \ue^{-sL_{\min}/2} \ ,
  \end{equation}
  and
  \begin{equation}
    \label{eq:trace_A_bound}
    \left|\tr \big( \bS \ue^{-s(\L +\R(\gf))} \big)^n\right|
    = \left|\tr \vA^n \right|
    \leq \sum_{j=1}^{2B} |\textrm{eig}(\vA)_j|^n
    \leq 2B \ue^{-snL_{\min}/2}.
  \end{equation}
  Using this bound in (\ref{eq:cauchy}) gives
  \begin{equation}\label{eq:bound in cauchy}
    \fl  \left| \frac{\partial^{m_1} \dots \partial^{m_B}}{
        \partial L_1^{m_1}\dots  \partial L_B^{m_B}}
      \tr \big( \bS \ue^{-s\L} \big)^n \right|
    \leqslant \frac{1}{(2\pi)^B} \int_0^{2\pi} \dots \int_0^{2\pi}
    \frac{2B \ue^{-snL_{\min}/2}}{(L_{\min}/2)^{|\m|}} \, \ud \phi_1 \dots
    \ud \phi_B
  \end{equation}
  which establishes (\ref{eq:partial_derivative_bound}).
\end{proof}

\section{Method of images expansion and the equivalence of two
  expansions}
\label{sec:images}

We can evaluate the trace of the cylinder kernel $T(t)$ by
constructing the kernel itself.  The cylinder kernel,
$T_{b b'}(t; x, y)$ where $x$ is measured on bond $b$ and $y$ on bond $b'$
satisfies the following equation on each bond $b$
\begin{equation}
  -\frac{\partial^2}{\partial x^2} T_{bb'}(t ; x , y)
  = \frac{\partial^2}{\partial t^2} T_{bb'}(t ; x , y)
  \label{eq:cylindereqn}
\end{equation}
for $t>0$, $T_{bb'} \rightarrow 0$ as $t \rightarrow \infty$, with
boundary conditions (\ref{eq:matching}) with respect
to the $x$ coordinate, and the initial condition $T_{bb'}(0; x,
y) = \delta_{bb'} \delta(x - y)$.  By separating variables in
(\ref{eq:cylindereqn}) it can be shown that the trace of $T$ is
\begin{equation}
  T(t) \stackrel{\mathrm{def}}{=}
  \sum_{b=1}^B \int_0^{L_b} T_{bb'}(t;x,x) \; \ud x
  = \sum_{n} e^{-k_n t} \ .
\end{equation}

The corresponding free space kernel (i.e. the solution of
equation~(\ref{eq:cylindereqn}) on the whole real line with the
initial condition $T_0(0;x-y) = \delta(x-y)$) is
\begin{equation}
  T_0(t;x-y) = \frac{t/ \pi}{t^2 + (x-y)^2}.
\end{equation}
We can apply the method of images (i.e.\ multiple reflection) to the
free space kernel to obtain the kernel on the graph.  The cylinder
kernel is then written in terms of paths which begin at $y$ on
the bond $b'$ and end at $x$ on the bond $b$ (the details of the
construction are given in \cite{t:W:VEQG}),
\begin{eqnarray}
  \fl T_{bb'}(t; x, y)
  = \delta_{bb'} T_0(t;x-y) \nonumber \\
  + \sum_{n=0}^\infty \sum_{\path \in \Paths_n} \Big[
  A_{b^+ \path\, b'^+} T_0(t; L_b + \ell_{\path} + x - y)
  + A_{b^+ \path\, b'^-} T_0(t ; -\ell_{\path} - x - y) \\\label{eq:moi}
  + A_{b^- \path\, b'^-} T_0(t; -\ell_{\path} - L_b + x -y)
  + A_{b^- \path\, b'^+} T_0(t; L_{b'} + \ell_{\path} + L_b - x - y)\Big] \nonumber
\end{eqnarray}
In the above expression, we denote the two directed bonds associated
with the undirected bond $b$ by $b^+$ and $b^-$.  The path of
topological length $n$, $\path = (\alpha_1, \ldots, \alpha_n)$ is an
$n$ vector of directed bonds and $\Paths_n$ is the set of all such
paths.  The metric length of a path is $\ell_{\path} = \sum_{j=1}^n
L_{\alpha_j}$ and $A_{\path} = [\bS]_{\alpha_n \alpha_{n-1}} \cdots
[\bS]_{\alpha_3 \alpha_2} [\bS]_{\alpha_2 \alpha_1}$ is the stability
amplitude of the path.  Note that even at the stage represented in
(\ref{eq:moi}) we have assumed the matrix $\bS$ is $k$-independent.

To obtain the trace, we let $b=b'$ and $y = x$.
While this corresponds to ``closing'' the paths, we do not always get
periodic paths in the topological sense of Section~\ref{sec:formal}.
Indeed, a periodic path would return to the initial point $x_b$ with
the same momentum (or direction), whereas when the paths corresponding to
the second and fourth terms of~(\ref{eq:moi}) return to $x_b$, the
momentum has the opposite sign, see Fig. \ref{fig:periodic_and_bounce}.
The latter paths we shall call {\em bounce paths}. The difficulty in
the method of images is in the handling of the bounce paths.

\begin{figure}[htb]
  \begin{center}
    \includegraphics[scale=0.4]{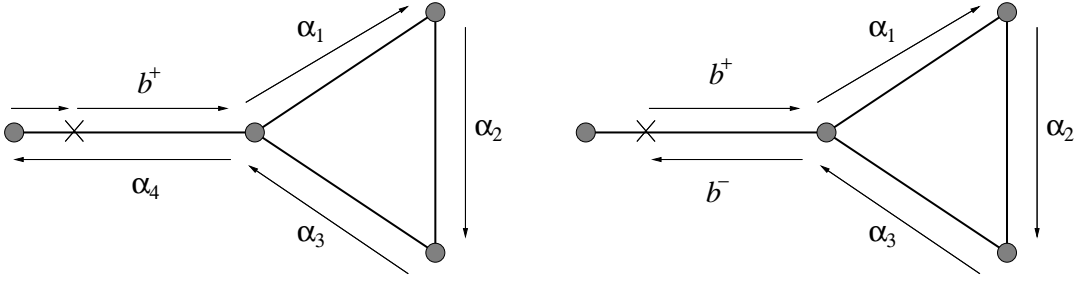}
    \caption{An example of a periodic path (left) and a bounce
      path (right).}\label{fig:periodic_and_bounce}
  \end{center}
\end{figure}

After integrating $T_{bb}(t; x, x)$ we break the formula into three
parts
\begin{equation}
T(t) = T_{\textrm{FS}}(t) + T_{\textrm{PO}}(t) + T_{\textrm{BP}}(t),
\end{equation}
where FS stands for free space, BP for bounce paths, and PO for
periodic orbits.  The three parts are (taking into account that $T_0(t;x)$
is even in $x$):
\begin{eqnarray}
  \label{eq:TFS}
  \fl T_{\textrm{FS}}(t) &= \sum_{b = 1}^B \int_0^{L_b} T_0(t;0) \; \ud x
  = T_0(t;0) \cL,\\
  \label{eq:TPO}
  \fl T_{\textrm{PO}}(t) &= \sum_{b = 1}^B \int_0^{L_b} \sum_{n=0}^\infty
  \sum_{\path \in \Paths_n}
  \Big[ A_{b^+ \path\, b^+} T_0(t;L_b + \ell_{\path})
  + A_{b^- \path\, b^-} T_0(t;\ell_{\path} + L_b) \Big] \; \ud x,\\
  \label{eq:TBP}
  \fl T_{\textrm{BP}}(t) &= \sum_{b = 1}^B \int_0^{L_b}
  \sum_{n=0}^\infty \sum_{\path\in\Paths_n}
  \Big[A_{b^+ \path\, b^-} T_0(t;\ell_\path + 2x)
  + A_{b^- \path\, b^+} T_0(t;2 L_b + \ell_\path - 2x)\Big]\; \ud x.
\end{eqnarray}

We shall use the following lemma to simplify the bounce path term.
\begin{lemma}
  \label{lem:bounce_bgone}
  If the scattering matrix $\bS$ of a graph is $k$-independent, then
  \begin{equation*}
    \sum_{\al =1}^{2B} A_{\al \al_{n-1} \cdots \al_1 \alb}
    = J_{\al_1 \al_{n-1}} A_{\al_{n-1} \cdots \al_1},
  \end{equation*}
  where $J_{\alpha,\beta} = \delta_{\alpha,\bar{\beta}}$.
\end{lemma}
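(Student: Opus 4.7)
The plan is to unfold the amplitude $A_{\al \al_{n-1}\cdots\al_1\alb}$ as an explicit product of $\bS$ entries, pull out the "chain'' in the middle which does not depend on the summation index $\al$, and recognize that the remaining sum over $\al$ is a single matrix product $[\bS\J\bS]_{\al_1,\al_{n-1}}$. The key algebraic fact is then $\bS\J\bS = \J$, which follows from Lemma~\ref{lem:k_indep_S} together with unitarity of $\bS$.

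First I would write out
\begin{equation*}
A_{\al\al_{n-1}\cdots\al_1\alb}
=[\bS]_{\al,\al_{n-1}}\,[\bS]_{\al_{n-1},\al_{n-2}}\cdots[\bS]_{\al_2,\al_1}\,[\bS]_{\al_1,\alb},
\end{equation*}
and factor out the $\al$-independent block $[\bS]_{\al_{n-1},\al_{n-2}}\cdots[\bS]_{\al_2,\al_1}$, which is precisely $A_{\al_{n-1}\cdots\al_1}$. Next, using $[\J]_{\beta,\al}=\delta_{\beta,\alb}$, I would rewrite $[\bS]_{\al_1,\alb}=[\bS\J]_{\al_1,\al}$. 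Then the $\al$-sum of the two remaining endpoint factors collapses to a matrix product:
\begin{equation*}
\sum_{\al=1}^{2B}[\bS\J]_{\al_1,\al}[\bS]_{\al,\al_{n-1}}=[\bS\J\bS]_{\al_1,\al_{n-1}}.
\end{equation*}

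At this point the only remaining task is to evaluate $\bS\J\bS$. Lemma~\ref{lem:k_indep_S} gives $\J\bS\J=\bS^\dagger$; right-multiplying by $\J$ (using $\J^2=\I$) yields $\J\bS=\bS^\dagger\J$, equivalently $\bS\J=\J\bS^\dagger$. Unitarity of $\bS$ then gives $\bS\J\bS=\J\bS^\dagger\bS=\J$, so $[\bS\J\bS]_{\al_1,\al_{n-1}}=J_{\al_1,\al_{n-1}}$, and combining with the chain factor pulled out earlier produces the asserted identity.

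There is no substantial obstacle here; the argument is essentially a careful bookkeeping of indices plus one line of matrix algebra. The only point that requires attention is keeping track of which end of the path the bar/reversal is attached to, so that the $\J$ inserted via $[\bS]_{\al_1,\alb}=[\bS\J]_{\al_1,\al}$ ends up sandwiched between the two $\bS$ factors, which is exactly what makes the unitarity identity $\bS\bS^\dagger=\I$ applicable.
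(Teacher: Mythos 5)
Your proof is correct and follows essentially the same route as the paper: expand the amplitude, factor out the $\al$-independent chain $A_{\al_{n-1}\cdots\al_1}$, recognize the remaining sum as $[\bS\J\bS]_{\al_1\al_{n-1}}$, and conclude via $\J\bS\J=\bS^\dagger$ (Lemma~\ref{lem:k_indep_S}) and unitarity that $\bS\J\bS=\J$. The paper's own proof is just a more compressed version of the same index bookkeeping and matrix identity.
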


\begin{proof}
  Writing out the above,
  \begin{equation}
    \sum_{\al} A_{\al \al_{n-1} \cdots \al_1 \alb}
    = S_{\al_{n-1} \al_{n-2}} \cdots S_{\al_3 \al_2} S_{\al_2 \al_1}
    \sum_{\al} S_{\al_1 \alb} S_{\al \al_{n-1}}.
  \end{equation}
  The last sum in the above is equivalent to an element of $\bS \J \bS$,
  but $\J \bS \J \bS = \I$, see lemma \ref{lem:k_indep_S}, and $\J^{-1} =
  \J$, therefore $\bS \J \bS = \J$.
\end{proof}

Now we come to the following theorem which proves the equivalence of
this method to the trace formula method of Section~\ref{sec:trace}.
\begin{theorem}$ $
\begin{equation}
    T(t) = \frac{\cL}{\pi t} + \frac14 \tr \;(J S)
    + \sum_{n=1}^\infty \sum_{p \in \PP_n} A_p \frac{\ell_p}{n}
    \frac{t/ \pi}{t^2 +\ell_p^2}
  \end{equation}
  and consequently,
  \begin{equation}
    E_c = - \frac1{2 \pi} \sum_{n=1}^\infty \sum_{p \in \PP_n}
    \frac{A_p}{\ell_p n}.
  \end{equation}
\end{theorem}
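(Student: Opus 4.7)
The proof splits $T(t) = T_{\textrm{FS}}(t) + T_{\textrm{PO}}(t) + T_{\textrm{BP}}(t)$ following (\ref{eq:TFS})--(\ref{eq:TBP}) and evaluates each term separately. The free-space piece is immediate: since $T_0(t;0) = 1/(\pi t)$, equation (\ref{eq:TFS}) gives $T_{\textrm{FS}}(t) = \cL/(\pi t)$.

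For $T_{\textrm{PO}}(t)$, the integrand in (\ref{eq:TPO}) is independent of $x$, so the $x$-integral yields a factor $L_b$. The outer triple sum $\sum_b \sum_{\pm} \sum_{\path \in \Paths_n}$ is in bijection with the set $\PP_{m}$ of periodic paths of period $m = n+1$, via $q = (b^{\pm}, \al_1, \ldots, \al_n)$; under this bijection $L_b$ becomes $L_{\beta_1(q)}$, the amplitude becomes $A_q$, and the argument of $T_0$ becomes $\ell_q = L_b + \ell_{\path}$. Because $A_q$ and $\ell_q$ are cyclic-shift invariants of the orbit while each orbit appears in $\PP_m$ together with all its cyclic shifts, summing $L_{\beta_1(q)}$ over the shifts of a given orbit produces $\ell_q$; a short bookkeeping check (which correctly handles orbits with $r$-fold self-symmetry) shows that one may replace $L_{\beta_1(q)}$ by $\ell_q/m$ inside the sum over $\PP_m$. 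This produces exactly $T_{\textrm{PO}}(t) = \sum_{m=1}^{\infty}\sum_{q \in \PP_m} A_q (\ell_q/m) T_0(t;\ell_q)$, the periodic-orbit sum of the theorem.

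The heart of the proof is $T_{\textrm{BP}}(t)$. After the substitution $u = L_b - x$ in the second summand of (\ref{eq:TBP}), both bounce terms share the integrand $T_0(t;\ell_{\path} + 2x)$ on $[0,L_b]$, whose $x$-integral equals $\frac{1}{2}\int_{\ell_{\path}}^{\ell_{\path} + 2L_b} T_0(t;y)\,\ud y$. Interchanging the sum and the $y$-integral rewrites the bounce contribution as $T_{\textrm{BP}}(t) = \frac{1}{2}\int_0^\infty T_0(t;y)\,\rho(y)\,\ud y$, where $\rho(y) = \sum_{n,\path,b} (A_{b^+\path b^-} + A_{b^-\path b^+}) \iz_{\{\ell_{\path} \leq y \leq \ell_{\path} + 2L_b\}}$ is the total amplitude of bounce trajectories whose geometric length interval contains $y$. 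Since $\int_0^\infty T_0(t;y)\,\ud y = 1/2$, the desired identity $T_{\textrm{BP}}(t) = \frac{1}{4}\tr(\J\bS)$ reduces to proving $\rho(y) = \tr(\J\bS)$ for every $y > 0$. This identity is the main obstacle of the proof. I plan to establish it by first applying Lemma \ref{lem:bounce_bgone} to collapse the sum over $b$ for each fixed interior path $\path$, yielding the constraint $\al_1 = \bar{\al}_n$ and amplitude factor $A_{\path}$; then by iterating the unitarity identity $\bS\J\bS = \J$, the contributions of successive values of $n$ telescope so that for each fixed $y$ the ``outer'' intervals of shorter bounce paths pair against the ``inner'' intervals of longer ones. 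Control of the $m \to \infty$ tail uses the exponential-decay estimates on $(\bS e^{-s\L})^n$ employed in Section \ref{sec:conv}, and the surviving $n = 0$ boundary term evaluates to $\sum_\gamma [\bS]_{\bar{\gamma}, \gamma} = \tr(\J\bS)$.

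Combining the three pieces yields the stated formula for $T(t)$. The consequence follows by taking $E_c$ to be the regular part of $-T'(t)/2$ as $t \to 0^+$: the $\cL/(\pi t)$ term is the UV-divergent Weyl contribution removed by renormalisation, the constant $\frac{1}{4}\tr(\J\bS)$ has vanishing derivative, and term-by-term differentiation of the periodic-orbit sum (justified by the absolute-convergence bound (\ref{eq:jg_estimate}) from Section \ref{sec:conv}) produces $E_c = -\frac{1}{2\pi}\sum_{n=1}^{\infty}\sum_{p \in \PP_n} A_p/(\ell_p\, n)$.
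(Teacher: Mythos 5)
Your proposal follows the paper's proof essentially step for step: the same decomposition into free-space, periodic-orbit and bounce-path pieces, the same cyclic-shift averaging that converts the first-bond length $L_{\alpha}$ into $\ell_p/n$, and the same use of Lemma~\ref{lem:bounce_bgone} together with $\bS\J\bS=\J$ to telescope the bounce-path sum down to the $n=0$ boundary term $\frac14\tr(\J\bS)$. Your reformulation of the cancellation as the pointwise identity $\rho(y)=\tr(\J\bS)$ is an equivalent repackaging of the paper's cutoff-and-induction argument (and since for fixed $y$ only finitely many topological lengths satisfy $\ell_{\path}\leq y$, the tail is handled trivially, without needing the decay estimates you invoke), while the passage from $T(t)$ to $E_c$ is justified exactly as in the paper via the convergence results of Section~\ref{sec:conv}.
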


\begin{remark}
  A special case of this proof was done for the heat kernel with
  Kirchhoff conditions by Roth \cite{Roth:FirstTrace}.  Similarly,
  Kostrykin and Schrader have an analogue of this proof (again, for
  the heat kernel) in \cite{Kostrykin:HeatKernels}.
\end{remark}

\begin{proof}
  The first term in $T(t)$ is the free space term $T_{\textrm{FS}}$,
  which we found above to be equal to $T_0(t;0)\cL = \cL/\pi t$.  Thus
  we need only consider the periodic path and bounce path terms.

  \noindent\emph{Periodic orbit contribution.}
  There is no $x$-dependence in (\ref{eq:TPO}), so the integration
  gives
  \begin{equation}
    T_{\textrm{PO}}(t) = \sum_{n=0}^\infty \sum_{\path \in \Paths_n}
    \sum_{b = 1}^B \Big[ A_{b^- \path\, b^-}
    + A_{b^+ \path\, b^+} \Big] T_0(t; \ell_{\path} + L_b) L_b.
  \end{equation}
  The two amplitudes can be written as a single sum over the directed
  bond $\alpha$,
  \begin{equation}
    \sum_{b = 1}^B \Big[ A_{b^- \path\, b^-}
    + A_{b^+ \path\, b^+}\Big] T_0(t; \ell_{\path} + L_b) L_b
    = \sum_{\alpha = 1}^{2B} A_{\alpha \path \alpha}
    T_0(t; \ell_\path + L_\alpha) L_\alpha \ .
  \end{equation}
  The path $\alpha \path \alpha$ is actually a periodic path of
  period $n+1$ which we will denote by $p$.  The amplitude of $p$ is
  $A_p = A_{\alpha \path\, \alpha}$ and its length is $\ell_p =
  \ell_\path + L_\alpha$.  Thus,
  \begin{equation}
    \label{eq:G_PO_simplify}
    T_{\textrm{PO}}(t) = \sum_{n=0}^\infty \sum_{p \in \PP_{n+1}}
    A_p \frac{\ell_p}{n+1} T_0(t; \ell_p)
    = \sum_{n=1}^\infty \sum_{p \in \PP_n}
    A_p \frac{\ell_p}{n} T_0(t; \ell_p) \ .
  \end{equation}

  \noindent\emph{Bounce path contribution.}
  We can again combine the two amplitudes and change the variables in
  the integrals to obtain
  \begin{equation}
    T_{\textrm{BP}}(t) = \frac12 \sum_{n=0}^\infty
    \sum_{\path\in\Paths_n} \sum_{\alpha=1}^{2B} A_{\al \path\, \alb}
    \int_{\ell_\path}^{ \ell_\path + 2L_\alpha} T_0(t;x) \; \ud x \ .
  \end{equation}
  We now fix $y>0$ and introduce the cutoff function,
  \begin{equation}
    H(y-x) = \cases{1 & $x \leqslant y$ \\
      0 & $x > y$ \\}  \ .
  \end{equation}
  Instead of $T_0(t;x)$ in the integral, we consider $\hat{T}_y(t;x)
  = T_0(t;x)H(y-x)$.  Since the minimum bond length $\lmin$ is greater
  than zero, taking $m$ large enough we will have
  $\ell_\path>y$ for all paths of topological length $m-1$ or
  greater.  Therefore, for a path $\path$ in $\Paths_{m-1}$ or in
  $\Paths_m$ and any $\alpha$, we can write
  \begin{equation}
    \int_{\ell_\path}^{\ell_\path + 2 L_\alpha} \hat{T}_y(t;x)\; \ud x
    = \int_{\ell_\path}^{y} \hat{T}_y(t;x)\; \ud x,
  \end{equation}
  since the integrand is identically zero on both intervals of
  integration.  We can also ignore all paths from $\Paths_n$ with $n>m$.

  We can therefore write the bounce path contribution with the given
  cutoff in the following form,
  \begin{eqnarray}
    \label{eq:three_sets}
    \hat{T}_{\textrm{BP}}(t)
    = \frac12 \sum_{n=0}^{m-2} \sum_{\path\in\Paths_n}
    \sum_{\alpha=1}^{2B} A_{\al \path\, \alb}
    \int_{\ell_\path}^{\ell_\path + 2L_\alpha} \hat{T}_y(t;x)\; \ud x \nonumber
    \\
    + \frac12 \sum_{\path\in\Paths_{m-1}}
    \sum_{\alpha=1}^{2B} A_{\al \path\, \alb}
    \int_{\ell_\path}^{y} \hat{T}_y(t;x)\; \ud x
    + \frac12 \sum_{\path\in\Paths_m}
    \sum_{\alpha=1}^{2B} A_{\al \path\, \alb}
    \int_{\ell_\path}^{y} \hat{T}_y(t;x)\; \ud x \ .
  \end{eqnarray}
  Applying Lemma~\ref{lem:bounce_bgone} to the last set of sums in
  (\ref{eq:three_sets}), we obtain
  \begin{equation}
    \sum_{\path\in\Paths_m} \sum_{\alpha=1}^{2B} A_{\al \path \alb}
    \int_{\ell_\path}^y \hat{T}_y(t;x) \; \ud x
    = \sum_{\path\in\Paths_{m-2}}
    \sum_{\beta=1}^{2B} A_{\beta \path \bar{\beta}}
    \int_{\ell_\path + 2L_\beta}^y \hat{T}_y(t;x) \; \ud x.
    \label{eq:m}
  \end{equation}
  Here the new path is the same as the old path but with the first and
  the last bond removed and $\beta$ corresponds to the last bond of
  the old path.

  Now we can take the sum corresponding to $n=m-2$ in
  (\ref{eq:three_sets}) and add it to the result of (\ref{eq:m}),
  \begin{eqnarray}
    \label{eq:joining}
    \sum_{\path\in\Paths_{m-2}}
    \sum_{\alpha=1}^{2B} A_{\al \path \alb}
    \int_{\ell_\path}^{\ell_\path + 2L_\alpha} \hat{T}_y(t;x)\; \ud x
    + \sum_{\path\in\Paths_{m-2}}
    \sum_{\beta=1}^{2B} A_{\beta \path \bar{\beta}}
    \int_{\ell_\path + 2L_\beta}^y \hat{T}_y(t;x) \; \ud x \nonumber \\
    = \sum_{\path\in\Paths_{m-2}}
    \sum_{\al =1}^{2B} A_{\al \path \alb}
    \int_{\ell_\path}^y \hat{T}_y(t;x) \; \ud x
  \end{eqnarray}
  Therefore, $\hat{T}_{\textrm{BP}}(t)$ can be rewritten exactly in the
  form of (\ref{eq:three_sets}) but with $m$ reduced by 1.  Proceeding
  by induction, we obtain
  \begin{equation}
    \hat{T}_{\textrm{BP}}(t)
    = \frac12 \sum_{\alpha}^{2B} A_{\al \alb}
    \int_0^y \hat{T}(t;x,y) \; \ud x
    + \frac12 \sum_{\alpha = 1}^{2B} J_{\al \al}
    \int_{L_\alpha}^y \hat{T}(t;x,y) \; \ud x.
  \end{equation}
  However, $J_{\al \al} = 0$ and we can take the limit
  $y \rightarrow \infty$ to get back $T_{\textrm{BP}}(t)$,
  \begin{equation}
    T_{\textrm{BP}}(t)
    = \frac12 \left[\sum_{\al=1}^{2B} (\bS \J)_{\al \al}\right]
    \int_0^\infty T_0(t;x) \; \ud x = \frac14 \tr(\bS \J)\ .
    \label{eq:G_BP_simplify}
  \end{equation}
  The significance on this term is explored thoroughly in
  \cite{Me:Index4Graphs}.  Since it is constant it vanishes upon
  differentiation and thus makes no contribution to the vacuum energy
  expression.
\end{proof}

The above method can be applied to other integral kernels, and we can also
find the vacuum energy density if we look at $-\frac 1 2
\frac{\partial}{\partial t} T_{bb}(t;x,x)$, see \cite{t:W:VEQG}.

\section{Random matrix models of vacuum energy}
\label{sec:rmt}

It has been observed by Fulling
\cite{Fulling:Private,p:SAF:GLVECOT:} that level repulsion tends to
decrease the magnitude of vacuum energy.  A natural conclusion would
be that in a chaotic system the vacuum energy should be suppressed.  In
this section we attempt to quantify and model this observation.

The first serious problem is that of comparison: vacuum energy should
be suppressed compared to what?  One cannot directly compare vacuum
energy of a chaotic system to that of an integrable one: such systems
would be too different.  Thus the right approach seems to be the
average of the energy over an appropriate ensemble of
chaotic/integrable systems

In such situations it is customary to employ random matrices as models
of chaotic systems.  Which leads to a second problem: vacuum energy is
not an exciting quantity when the spectrum is finite.  Thus, (finite)
random matrices do not immediately provide a suitable model.

In this section we use a fusion of random matrix and graph models as a
testing ground for the above conjecture.  Namely, we study quantum
graphs with equal bond lengths but with scattering matrices drawn from
the appropriate ensembles of unitary matrices.  The advantages are
clear: each individual system will have an infinite spectrum, the
spectra (in the limit of large graphs) will have the desired
statistics and the averaging can be done explicitly.

\subsection{Average vacuum energy}
\label{sec:averages}

The spectrum of a generic quantum system with a chaotic classical
counterpart is observed to behave like that of a random matrix, which
is referred to as the Bohigas-Giannoni-Schmit conjecture
\cite{p:BGS:CCQS}.  For a system with time-reversal symmetry the
appropriate ensemble of unitary matrices is the circular orthogonal
ensemble (COE) while in the absence of time-reversal symmetry it is
the circular unitary ensemble (CUE), which is the unitary group $U(N)$
with Haar measure see \cite{Haake:QuantumSignatures}.  For a system with
time-reversal symmetry and half-integer spin the random
matrix should be drawn from the circular symplectic ensemble (CSE).
To model the
vacuum energy of a generic chaotic system we consider quantum graphs
with equal bond lengths where the scattering matrix $\bS$ is taken
from an appropriate ensemble.  In drawing $\bS$ from a random matrix ensemble we must first assume that the random matrix models a graph which allows scattering between all the edges, in other words a rose graph where one central vertex connects $B$ loops.  A random matrix $\bS$ will also not, in general, correspond to scattering amplitudes accessible from matching conditions of a self-adjoint Laplace operator.  However, both these limitations seem to be intrinsic to any random matrix model.  
Our model should therefore only be regarded as a way to investigate average properties of vacuum energy when its generating spectrum has the desired random matrix level repulsion.

The average vacuum energy of such a model can be evaluated using
equation (\ref{eq:Ec_gen_graph_equal}), which expresses the vacuum
energy of a graph with equal bond lengths in terms of the eigenphases
of $\bS$.  Using the standard expression for the eigenphase density of
the random matrices results in
\begin{equation}
  \label{eq:Ec_equal_average}
  \fl \langle E_c \rangle_\beta =
  -\frac{\pi}{L} \frac{1}{\mathcal{N}_\beta}\int_0^{2\pi} \dots \int_0^{2\pi}
  \sum_{j=1}^{2B} B_2(\theta_j/2\pi) \prod_{l<m} \left|
    2 \sin \left( \frac{\theta_l - \theta_m}{2} \right) \right|^\beta \ud
  \theta_1 \dots \ud \theta_{2B} = 0
\end{equation}
where $\mathcal{N}_\beta$ is a normalization constant and
$\beta=1$ for the COE, $\beta=2$ for the CUE and $\beta=4$ for CSE.

The eigenphases of an integrable system, in contrast, behave like
uniformly distributed random numbers on the interval $[0,2\pi]$.  This
can also be modeled by integrating the vacuum energy expression,
equation (\ref{eq:Ec_gen_graph_equal}), over $2B$ independent uniform
random phases,
\begin{equation}
  \label{eq:Ec_equal_average_poisson}
  \langle E_c \rangle_{\textrm{Poisson}} =
  -\frac{\pi}{L} \frac{1}{(2\pi)^{2B}}\int_0^{2\pi} \dots \int_0^{2\pi}
  \sum_{j=1}^{2B} B_2(\theta_j/2\pi) \, \ud
  \theta_1 \dots \ud \theta_{2B}
  = 0 \ .
\end{equation}
In each case the average $\langle E_c \rangle$ of the vacuum energy is
zero.  In fact, the vacuum energy expression must be zero when
averaged over any measure which is invariant under a rotation of all
the eigenphases by some angle $\gamma$.  Indeed, in
Section~\ref{sec:equal} we have shown that
\begin{equation}
  E_c = -\frac{1}{\pi L} \sum_{n=1}^{\infty} \frac{1}{2n^2}
  \left(\tr \bS^n  +  \tr(\bS^\dagger)^n \right)\ .
\end{equation}
But for any rotationally invariant distribution of eigenphases on the
unit circle, $\langle \tr \bS^n \rangle = 0$ for all $n>0$ and thus
the mean vacuum energy is always zero.  As the Casimir force is the derivative of $E_c$ with respect to $L$ this suggests that the there
is no \textit{a priori} reason to expect either an attractive or
repulsive force based purely on the underlying nature of the classical
dynamics.

\subsection{Variance}
\label{sec:variance}

To get a handle on how the magnitude of the vacuum energy is affected
by the distribution of the eigenvalues we will calculate the variance
of the vacuum energy for the ensembles of random graphs introduced
previously.  For Poisson distributed eigenphases the variance is
\begin{eqnarray}
  \langle E_c^2 \rangle_{\textrm{Poisson}} & =
  \frac{\pi^2}{L^2} \frac{1}{(2\pi)^{2B}}
  \int_0^{2\pi} \dots \int_0^{2\pi} \sum_{j=1}^{2B} B_2^2(\theta_j/2\pi) \, \ud
  \theta_1 \dots \ud \theta_{2B} \nonumber \\ & = \frac{\pi^2 B}{90 L^2} \ ,
\end{eqnarray}
where $2B$ is the dimension of $\bS$ and we used the independence of
$\theta_j$ to conclude that
\begin{equation}
  \Big\langle  B_2(\theta_r/2\pi) B_2(\theta_j/2\pi) \Big\rangle
  = \Big\langle B_2(\theta_r/2\pi) \Big\rangle
  \Big\langle B_2(\theta_r/2\pi) \Big\rangle = 0.
\end{equation}

The variance of the vacuum energy modeled by random matrices from the
circular ensembles can be computed using expression (\ref{eq:Ec_power S}),
\begin{equation}
  \langle E_c^2 \rangle =
  \frac{1}{\pi^2L^2} \sum_{m,n=1}^\infty \frac{1}{4n^2m^2} \Big\langle
  \big( \tr \bS^n +\tr (\bS^\dagger)^n \big)
  \big( \tr \bS^m + \tr (\bS^\dagger)^m  \big) \Big\rangle.
\end{equation}
For the circular ensembles $\langle \tr \bS^n \tr (\bS^\dagger)^m
\rangle =0$ unless $m=n$ as the average of a product of matrix
elements is zero unless the number of elements of the matrix and the
number from its Hermitian conjugate are the same \cite{p:BB:DMI}.
Consequently
\begin{equation} \label{eq:varformfactor}
 \langle E_c^2 \rangle = \frac{1}{2\pi^2L^2}
 \sum_{n=1}^\infty \frac{1}{n^4} \langle |\tr \bS^n |^2 \rangle \ .
\end{equation}

We notice that $\langle |\tr\bS^n |^2 \rangle$ is the form factor of
the (finite) ensemble and use the standard formulae
\cite{Haake:QuantumSignatures} for CUE
\begin{equation}
  \langle |\tr \bS^n |^2 \rangle_{\textrm{CUE}} = \cases{(2B)^2 & $n=0$ \\
  n & $|n|  < 2B$ \\
  2B & $|n| \geqslant 2B$ \\}
\end{equation}
to obtain
\begin{equation}
  \langle E_c^2 \rangle_{\textrm{CUE}}=   \frac{1}{2\pi^2L^2}
  \left( \frac{1}{2} \Psi^{(2)}(2B) + \zeta(3)
    + \frac{B}{3} \Psi^{(3)}(2B) \right)
\end{equation}
where $\Psi^{(n)}(x)$ is the $n$-th polygamma function and $\zeta$ the
Riemann zeta function.  For all fixed $B$ this is less than
$\pi^2B/360$ the variance of the Poisson distributed eigenphases.  In
fact
\begin{equation}
\lim_{B\to \infty}  \langle E_c^2 \rangle_{\textrm{CUE}} =
  \frac{\zeta(3)}{2\pi^2L^2}   \ .
\end{equation}
This result parallels that for a random matrix model of the grand
potential considered in \cite{LebMonBoh01}.  Thus, while the variance
of the Poisson ensemble grows linearly with matrix size, the CUE
variance converges.

The relevant parts of the form factor of the COE and CSE for finite
matrix size are,
\begin{eqnarray}\label{eq:COEformfactor}
  \langle |\tr \bS^n |^2 \rangle_{\textrm{COE}}  &= \left\{ \begin{array}{lcl}
      2n - n \sum_{m=1}^n \frac{1}{m+(2B-1)/2}& & 0<n  \leqslant 2B \\
      4B - n \sum_{m=1}^{2B} \frac{1}{m+n-(2B+1)/2}& & 2B \leqslant n \\
    \end{array} \right. \\ \label{eq:CSEformfactor}
  \langle |\tr \bS^n |^2 \rangle_{\textrm{CSE}}  &= \left\{ \begin{array}{lcl}
      2n + n \sum_{m=1}^{n} \frac{1}{(2B+1)/2-m}& & 0 < n  \leqslant 2B \\
      4B & & 2B \leqslant n
    \end{array} \right.
\end{eqnarray}
Note that in the CSE form factor the double degeneracy of the
eigenphases of $\bS$ (Kramers' degeneracy) has not been lifted.  Using
(\ref{eq:varformfactor}) we evaluate
\begin{eqnarray}
  \fl \langle E_c^2 \rangle_{\textrm{COE}}= \frac{1}{\pi^2L^2} \left(
    \left( \frac{1}{2} \Psi^{(2)}(2B) + \zeta(3) \right) \left(
      1+\frac{1}{2}\Psi(B+1/2) \right)
    -\sum_{n=1}^{2B-1} \frac{\Psi(n+B+1/2)}{2n^3} \right. \\
  \left. + \sum_{n=2B}^{\infty} \left[\frac{2B}{n^4} +
      \frac{\Psi(n-B+1/2) - \Psi(n+B-1/2)}{2n^3} \right] \right)
  \label{eq:COEvar}
\end{eqnarray}
\begin{eqnarray}
\fl  \langle E_c^2 \rangle_{\textrm{CSE}}= \frac{1}{\pi^2L^2} \left(
    \left( \frac{1}{2} \Psi^{(2)}(2B) + \zeta(3) \right) \left(
      1+\frac{1}{2}\Psi(1/2-B) \right)
  \right. \\
  \left. -\sum_{n=1}^{2B-1} \frac{\Psi(n-B+1/2)}{2n^3} +
    \frac{B}{3}\Psi^{(3)}(2B) \right) \label{eq:CSEvar}
\end{eqnarray}
For $B>1$ $\langle E_c^2 \rangle_{\textrm{COE}}$ and $\langle E_c^2
\rangle_{\textrm{CSE}}$ are less than $\langle E_c^2
\rangle_{\textrm{Poisson}}$.

While the results do not have a simple closed form, in the limit of
large matrices the result is rather concise,
\begin{equation}
   \lim_{B\to \infty}  \langle E_c^2 \rangle_{\textrm{COE}}
   = \frac{\zeta(3)}{\pi^2L^2} =
   \lim_{B\to \infty}  \langle E_c^2 \rangle_{\textrm{CSE}}
\end{equation}

Modeling the vacuum energy variance of a quantum graph through random
matrices suggests that the magnitude of the vacuum energy where
eigenphases of $\bS$ experience level repulsion are indeed smaller on
average than those where the eiegenphases are Poisson distributed.
Moreover, the magnitude gets smaller as the level repulsion increases
from linear (COE) to quadratic (CUE) and quartic (CSE).  Indeed, to
compare the effect of the increased level repulsion in CSE we need to
lift the Kramers' degeneracy: otherwise the degeneracy ``compensates''
the repulsion.  Without the degeneracy, the result for CSE becomes 4
times smaller:
\begin{equation*}
  \lim_{B\to \infty}  \langle E_c^2 \rangle_{\textrm{CSE/Kramers}}
  = \frac{\zeta(3)}{4\pi^2L^2},
\end{equation*}
thus leading to
\begin{equation*}
  \langle E_c^2 \rangle_{\textrm{COE}} > \langle E_c^2
  \rangle_{\textrm{CUE}} > \langle E_c^2 \rangle_{\textrm{CSE/Kramers}}
\end{equation*}
for $B>1$.


\section{Conclusions}

Through both the method of images and the trace formula, we
demonstrate that the vacuum energy in quantum graphs is a well-defined
quantity (i.e.\ it is both convergent and a smooth function of the
bond lengths).  The closed form expression (\ref{eq:answer}) is
dependent only on the periodic paths in the quantum graph; this is a
consequence of the exactness of the trace formula which includes only
those paths.  Having demonstrated how the bounce paths (closed paths that
are not periodic) cancel when using the method of images we hope
that our proof will shed light on the observation that periodic paths
provide the correct leading asymptotic behavior even when the trace
formula is only semiclassically correct.

The smoothness of the expression for the vacuum energy in a quantum
graph allows us to suggest an alternative method for its calculation,
by approximating with systems with simpler geometries.  In the case of
graphs the ``simpler geometry'' means rational bond lengths, where an
explicit expression for the vacuum energy is obtained.

We also suggest a random ensemble model for the vacuum energy when the
statistics of the spectrum are Poisson (for integrable systems) or
random matrix (for chaotic systems).  We find the average energy in
both cases to be zero, thus giving no \emph{a priori} reason to expect
a positive or negative energy from the dynamics.  Furthermore, we find
the variance of the energy and conclude that the magnitude of the energy
is typically smaller when the level repulsion is stronger.  We stress
that this prediction is only correct in the probabilistic sense and no
conclusions about particular systems can yet (if ever!) be drawn.



\ack
The authors would like to thank S.A.~Fulling, J.P.~Keating, K.~Kirsten,
P.~Kuchment, M.~Pivarsky and B.~Winn for their helpful comments.  This
material is based upon work supported by the National Science
Foundation under Grants No.~DMS-0604859, PHY-0554849 and DMS-0648786.
The authors would also like to thank the Isaac Newton Institute for
Mathematical Sciences, Cambridge, UK, where part of the research took
place.


\section*{References}
\bibliographystyle{unsrt}
\bibliography{bhw}

\end{document}